\newtheorem{proposition}{Proposition}
\newtheorem{theorem}{Theorem}
\newtheorem{corollary}{Corollary}
\newtheorem{example}{Example}[section]
\newcommand{\blind}{0}
\begin{document}

\def\spacingset#1{\renewcommand{\baselinestretch}%
{#1}\small\normalsize} \spacingset{1}


\if0\blind
{
  \title{\bf The Exponentiated Hypoexponential Disribution.}
  \author{Anass Nassabein\thanks{
    The authors gratefully acknowledge \textit{please remember to list all relevant funding sources in the unblinded version}}\hspace{.2cm}\\
    Department of Mathematics, Lebanese International University\\
    Therrar Kadri \\
    Department of Science, Northwestrern Polytecgnic\\
    Seifideen Kadry \\
    Department of Applied Computing and Technology, Noroff University College\\
    Khaled Smaili \\
    Department of Mathematics, Lebanese University}
 
 \maketitle
  } \fi
\if0\blind
{
  \bigskip
  \bigskip
  \bigskip
  \begin{center}
    {\LARGE\bf The Exponentiated Hypoexponential Disribution}
\end{center}
  \medskip
} \fi

\bigskip
\begin{abstract}
In this paper we study the Exponentiated Hypoexponential Distribution with
different parameters. The distribution added a parameter to the $n$
parameters of the Hypoexponenial distribution. We first derive a closed
expression of the probability density function and the cumulative
distribution function of Maximum Exponentiated Exponential distribution.
These functions are used to determine an exact expression of the probability
density function, the cumulative function, reliability function, and hazard
function of our Exponentiated Hypoexponential Distribution. We discuss
estimation of the parameters by maximum likelihood estimators. The
distribution has been fitted to a real life data set and the fit has been
found to be a serious competitor to the others.
\end{abstract}

\noindent%
{\it Keywords:} Maximum Exponentiated Exponential Distribution, Exponentiated Hypoexponential Distribution, Maximum Likelihood Estimation.
\vfill

\newpage
\spacingset{1.9} 
\section{Introduction}
\label{sec:intro}

The Exponential random variable like its alternatives, Erlang and Gamma,
plays a significant role in both queueing and reliability problems \cite{ab}
and \cite{q}. The time between independent events with constant average or
time to failure are typical examples of the Exponential random variable.
Some examples in many fields of science are: how long a cell-phone call
lasts \cite{fa}, how long it takes a computer network to transmit a message
from one node to another, and the distance between mutations on a DNA
stand..., see, \cite{Phd}.

In addition to that, the function of random variable possesses an important
role in the event modeling in many applied problems of engineering, physics,
economics, biology, genetics, medicine, see, \cite{b} and \cite{Phd}. One of
the important functions of random variable is the sum of independent random
variables especially the sum of independent Exponential distributions which
is known to be the Hypoexponential distribution. The Hypoexponential
distribution that also plays a significant role in modeling many events in
most domains of science, Trivedi \cite{25}, Markov process, Kadri et al. 
\cite{44}.

There are many ways to add one or more parameters to a parent distribution
function. Adding the parameters makes the resulting distribution more
flexible and richer for modeling data. Marshall and Olkin \cite{1997} added
a positive parameter to a general reliability function (RF). In their
consideration of a countable mixture of Pascal $(r,p)$ mixing proportion and
positive integer powers of RF, they obtained a reliability function with two
extra parameters in AL-Hussaini and Ghitany \cite{2005}. AL-Hussaini and
Gharib \cite{2009} obtained a new family of distributions as a countable
mixture with Poisson by adding a parameter.

The idea of Exponentiated distribution was introduced by Gupta et al. \cite%
{1998}, who discussed a new family of distributions called Exponentiated
Exponential distribution. The family has two parameters similar to that of
Weibull or Gamma's family. Gupta and Kundu studied some properties of the
distribution see, \cite{2001a}. They concluded that this distribution can be
an alternative possibility to Gamma or Weibull family, because their
observation was clear that several properties of the new family are similar
to those of the Weibull or Gamma family. They also examined the estimation
and inference aspects of the distribution see, \cite{2001b}, and \cite{2002}%
. The distribution has been further studied by Nadarajah and Kotz \cite{2003}%
. A class of goodness-of-fit tests for the distribution with estimated
parameter has been proposed by Hassan \cite{2005b}. Pal et al. \cite{2006}
studied the Exponentiated Weibull family as an extension of the Weibull and
Exponentiated Exponential families.

In this paper, we introduce the Exponentiated Hypoexponential distribution.
We start by characterizing the related functions that well describe our new
distribution. We show it by relating it to the Maximum Exponentiated
Exponential distribution. Next, we give an estimate of the new parameter k,
by MLE method. Eventually we run a simulation of data showing the
EHypoexponential distribution fits better than the original distribution.
The problem is solved completely by considering the different parameters of
Hypoexponential random variable.

\section{Preliminaries}

\label{EE Distribution}The Exponentiated Exponential (EE) distribution, and
known as the generalized Exponential distribution, is a particular case of
Gompertz-Verhulst distribution function when the parameter $\rho =1$. Thus, $%
X\sim EE\left( \lambda _{i},\alpha _{i}\right) $ is a two- parameter
Exponentiated Exponential random variable when it has the following
comulative distribution function (CDF);%
\begin{equation}
F(x;\alpha ;\lambda )=\left( 1-e^{-\lambda x}\right) ^{\alpha };\text{ }x>0
\label{CDF EE}
\end{equation}%
and therefore its probability density function (PDF) is of the form%
\begin{equation}
f(x;\alpha ;\lambda )=\alpha \lambda \left( 1-e^{-\lambda x}\right) ^{\alpha
-1}e^{-\lambda x};x>0  \label{PDF EE}
\end{equation}

\subsection{The Hypoexponential Distribution}

The Hypoexponential random variable is the sum of independent Exponential
random variable. This distribution is examined accoding to the stages of the
Exponential random variable.\ Whenever the stages are identical the
distribution obtained is Erlang distribution. The other case of the
Hypoexponential distribution is when the stages are different or the
parameters are exponential distribution are different. This case was
examined completlet by Samili et al. in \cite{Hy.diff}. They denoted this
distribution as the Hypoexpoential distribution with different parameters
and given as follows:

Let $X_{1\text{ }},X_{2\text{ }},...,X_{n\text{ }}$be independent
Exponential random variables with different, respective parameters $\alpha
_{i},$ $i=1,2,...,n$. Thus $S_{n}=\sum\limits_{i=1}^{n}X_{i\text{ }}$is the
Hypoexponential random variable with parameters and written as $S_{n}\sim
hypoexp(\overrightarrow{\alpha })$, where $\overrightarrow{\alpha }%
_{i}=\left( \alpha _{1},\alpha _{2},...,\alpha _{n}\right) $. Samili et al.
in \cite{Hy.diff} showed that the density function of $S_{n}$ is given as 
\begin{equation*}
f_{S_{n}}(t)=\sum\limits_{i=1}^{n}A_{i}f_{X_{i\text{ }}}(t)
\end{equation*}%
where 
\begin{equation}
A_{i} = \prod\limits_{j=1,\,j\neq i}^{n} \left(\frac{\alpha_{j}}{\alpha_{j}-\alpha_{i}}\right)
\label{EqAi}
\end{equation}
is denoted as the coefficients of Hypoexponential. The author as a
consequence detemined the other related function to $S_{n}$ as the CDF as 
\begin{equation}
F_{S_{n}}(t)=\sum\limits_{i=1}^{n}A_{i}F_{X_{i\text{ }}}(t)  \label{Hyp CDF}
\end{equation}

\subsection{Exponentiated Distributions}

Let X be a random variable with probability density function $\left(
PDF\right) $ $f\left( x\right) $ and the cumulative distributions function $%
\left( CDF\right) $\ $F\left( x\right) $, $x\in R^{1}$. Consider a random
variable $Z$ with the $CDF$ given by

\begin{equation}
G_{\alpha }\left( z\right) =\left[ F\left( z\right) \right] ^{\alpha },\text{
}z\in R^{1},\alpha >0  \label{CDF ExpoRV}
\end{equation}

Then Z is said to have an exponentiated distribution.

The $pdf$ of $Z$ is given by

\begin{equation}
g_{\alpha }\left( z\right) =\alpha \left[ F\left( z\right) \right] ^{\alpha
-1}f\left( z\right)  \label{PDF ExpoRV}
\end{equation}

The survival function $R_{\alpha }\left( z\right) $ and the failure rate $%
h_{\alpha }\left( z\right) $ of the distribution of Z are defined by

\begin{equation*}
R_{\alpha }\left( z\right) -1-G_{\alpha }\left( z\right)
\end{equation*}

\begin{equation*}
h_{\alpha }\left( z\right) =\frac{g_{\alpha }\left( z\right) }{1-G_{\alpha
}\left( z\right) }
\end{equation*}

\section{Maxmumim Exponentiated Exponential Distribution}

In this section we find a simplified expressions of PDF and CDF of the
Maximum Exponentiated Exponential distribution. The expressions obtained are
used later to derive the function of our new distribution the
EHypoexponential distribution.

In the following theorem we find a simplified expression of PDF and CDF of
the Maximum Exponentiated Exponential distribution when the exponents of the
EE distribution is a positive real number.

\begin{theorem}
\label{thcdfMEE1} Let $X_{i}\sim EE(\lambda _{i},\alpha _{i})\ $where $\alpha
_{i}\in 
\mathbb{R}
_{+}^{\ast },$ $\lambda _{i}>0,$ and $\lambda _{i}\neq \lambda _{j}$ for all 
$i\neq j,i=1,2,...,n$ and let $N=\max \left\{ X_{1},X_{2},...,X_{n}\right\} $
denoted as $N\sim MEE(\overrightarrow{\lambda },\overrightarrow{\alpha }),$ $%
\overrightarrow{\lambda }=\left( \lambda _{1},\lambda _{2},...,\lambda
_{n}\right) ,$ $\overrightarrow{\alpha }=\left( \alpha _{1},\alpha
_{2},...,\alpha _{n}\right) .$ Then the CDF of $N$ is 
\begin{equation*}
F_{N}(t) = \prod_{j=1}^{n} \left( \sum_{i=0}^{\infty} \binom{\alpha_{j}}{i} (-1)^{(\alpha_{j}-i)} e^{-\lambda_{j}(\alpha_{j}-i)t} \right)
\end{equation*}

and the PDF

\begin{equation*}
f_{N}(t) = \left( \prod_{j=1}^{n} (1-e^{-\lambda_{j}t})^{\alpha_{j}} \right) \left( \sum_{j=1}^{n} \frac{\lambda_{j}\alpha_{j}e^{-\lambda_{j}t}}{1-e^{-\lambda_{j}t}} \right)
\end{equation*}
\begin{proof}
Let $X_{i}$ be an Exponentiated Exponential distribution with rate parameter $\lambda_{i}$ and exponent parameter $\alpha_{i}$, $i=1,2,...,n$, where $\lambda_{i} \neq \lambda_{j}$ for all $i \neq j$. Then the CDF of the Maximum Exponentiated Exponential distribution is given by
\begin{eqnarray*}
F_{N}(t) &=& F_{X_{1}}(t) F_{X_{2}}(t) ... F_{X_{n}}(t) \\
&=& (1-e^{-\lambda_{1}t})^{\alpha_{1}}(1-e^{-\lambda_{2}t})^{\alpha_{2}}...(1-e^{-\lambda_{n}t})^{\alpha_{n}} \\
&=& \prod_{j=1}^{n}(1-e^{-\lambda_{j}t})^{\alpha_{j}}
\end{eqnarray*}
\end{proof}
\end{theorem}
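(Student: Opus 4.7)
The plan is to handle the CDF and the PDF as two separate computations, both starting from the independence structure of the $X_i$'s and the form of the Exponentiated Exponential CDF given in (\ref{CDF EE}).

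For the CDF, I would first exploit independence: since $N=\max\{X_1,\dots,X_n\}$, the event $\{N\le t\}$ is the intersection of the independent events $\{X_j\le t\}$, so
\begin{equation*}
F_N(t)=\prod_{j=1}^{n} F_{X_j}(t)=\prod_{j=1}^{n}(1-e^{-\lambda_j t})^{\alpha_j}.
\end{equation*}
This is the compact product form the author also arrives at. To match the series form stated in the theorem, I would then apply the generalized binomial theorem to each factor, writing $(1-e^{-\lambda_j t})^{\alpha_j}=\sum_{i=0}^{\infty}\binom{\alpha_j}{i}(-1)^i e^{-i\lambda_j t}$, which is valid for any real $\alpha_j>0$ since $|e^{-\lambda_j t}|<1$ for $t>0$. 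A reindexing $i\mapsto \alpha_j-i$ (applied formally, or literally when $\alpha_j$ is a positive integer so the sum terminates at $i=\alpha_j$) recovers the form displayed in the theorem with $(-1)^{\alpha_j-i}e^{-\lambda_j(\alpha_j-i)t}$. Taking the product over $j$ then yields the stated expression for $F_N(t)$.

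For the PDF, rather than term-by-term differentiation of a multiple series, I would use the logarithmic-derivative shortcut, which is cleaner. Setting $F_N(t)=\prod_j(1-e^{-\lambda_j t})^{\alpha_j}$ and taking logs gives $\log F_N(t)=\sum_j \alpha_j \log(1-e^{-\lambda_j t})$, so
\begin{equation*}
\frac{F_N'(t)}{F_N(t)}=\sum_{j=1}^{n}\frac{\lambda_j\alpha_j e^{-\lambda_j t}}{1-e^{-\lambda_j t}}.
\end{equation*}
Multiplying through by $F_N(t)$ and identifying $F_N'=f_N$ immediately yields the stated formula for $f_N(t)$.

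The only nontrivial issue I anticipate is the expansion step in the CDF: the displayed series uses $(-1)^{\alpha_j-i}$, which is ambiguous when $\alpha_j$ is a non-integer real number, so I would either restrict to integer exponents for that form or, for general $\alpha_j\in\mathbb{R}_+^*$, present the expansion in its standard form with $(-1)^i e^{-i\lambda_j t}$ and remark that the two agree after reindexing when $\alpha_j\in\mathbb{N}$. The PDF derivation, by contrast, is a direct product-rule calculation and should present no obstacle.
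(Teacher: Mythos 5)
Your proposal is correct and, for the portion the paper actually proves, takes the identical route: both you and the authors obtain $F_N(t)=\prod_{j=1}^{n}F_{X_j}(t)=\prod_{j=1}^{n}(1-e^{-\lambda_j t})^{\alpha_j}$ directly from the independence of the $X_i$. Where you differ is in completeness: the paper's proof stops at the compact product form and never derives either the infinite-series expression for $F_N(t)$ stated in the theorem or the formula for $f_N(t)$, whereas you supply both --- the series via the generalized binomial theorem $(1-e^{-\lambda_j t})^{\alpha_j}=\sum_{i=0}^{\infty}\binom{\alpha_j}{i}(-1)^i e^{-i\lambda_j t}$ (valid since $0<e^{-\lambda_j t}<1$), and the density via logarithmic differentiation, which cleanly produces $f_N(t)=F_N(t)\sum_{j}\lambda_j\alpha_j e^{-\lambda_j t}/(1-e^{-\lambda_j t})$. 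You are also right to flag that the theorem's displayed series, with $(-1)^{\alpha_j-i}$ and exponent $\lambda_j(\alpha_j-i)t$, is only meaningful after the reindexing $i\mapsto\alpha_j-i$, which requires $\alpha_j\in\mathbb{N}$ (the case of Corollary \ref{thcdfMEE}); for general $\alpha_j\in\mathbb{R}_+^{\ast}$ the standard form with $(-1)^i e^{-i\lambda_j t}$ is the correct one. In short, your argument is sound and strictly more complete than the paper's.
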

\begin{corollary}
Let $X_{i} \sim EE(\lambda_{i}, g_{i})$ where $g_{i} \in \mathbb{N}^{\ast}$, $\lambda_{i} > 0$, and $\lambda_{i} \neq \lambda_{j}$ for all $i \neq j, i = 1,2,...,n$. Let $N = \max\{X_{1}, X_{2},...,X_{n}\}$ denoted as $N \sim MEE(n, \overrightarrow{\lambda}, \overrightarrow{g})$, $\overrightarrow{\lambda} = (\lambda_{1}, \lambda_{2},...,\lambda_{n})$, $\overrightarrow{g} = (g_{1}, g_{2},...,g_{n})$. Then the CDF of $N$ is
\begin{equation*}
F_{N}(t) = \prod_{j=1}^{n}\left(\sum_{i=0}^{\infty}\binom{\alpha_{_{j}}}{i}(-1)^{(\alpha_{_{j}}-i)}e^{-\lambda_{j}(\alpha_{_{j}}-i)t}\right)
\end{equation*}
where $\binom{\alpha_{_{j}}}{i}=\frac{(\alpha)_{j}}{i!}=\frac{\alpha(\alpha-1)..\left(\alpha-j+1\right)}{i!}$ is the generalized binomial coefficient. The PDF of $N$ is given by
\begin{equation*}
f_{N}(t) = \left(\prod_{j=1}^{n}(1-e^{-\lambda_{j}t})^{\alpha_{j}}\right)\left(\sum_{j=1}^{n}\frac{\lambda_{j}\alpha_{j}e^{-\lambda_{j}t}}{1-e^{-\lambda_{j}t}}\right)
\end{equation*}
\end{corollary}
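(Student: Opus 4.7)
The plan is to view this corollary as the specialization of Theorem~\ref{thcdfMEE1} to positive-integer exponents $g_j$. Rather than re-deriving the independence argument for the maximum, I would start from the product form
$$F_N(t) = \prod_{j=1}^n (1-e^{-\lambda_j t})^{g_j},$$
already established in the proof of the theorem (and valid whenever $\alpha_j>0$, in particular for $\alpha_j=g_j\in\mathbb{N}^{\ast}$).

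Next, I would expand each factor $(1-e^{-\lambda_j t})^{g_j}$ via the ordinary binomial theorem. Because $g_j$ is a positive integer, the expansion terminates at $i=g_j$ and gives $\sum_{i=0}^{g_j}\binom{g_j}{i}(-1)^{i}e^{-\lambda_j it}$. To reach the exact form stated in the corollary, I would reindex $k\mapsto g_j-k$, use the integer-binomial symmetry $\binom{g_j}{k}=\binom{g_j}{g_j-k}$, and extend the summation range to $\{0,1,2,\ldots\}$; the extra terms vanish since $\binom{g_j}{k}=0$ for $k>g_j$. Plugging this back into each factor produces the displayed CDF expression.

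For the PDF, I would apply logarithmic differentiation to the product form,
$$f_N(t) = F_N(t)\cdot\frac{d}{dt}\log F_N(t),$$
and compute
$$\frac{d}{dt}\log F_N(t)=\sum_{j=1}^{n} g_j\cdot\frac{\lambda_j e^{-\lambda_j t}}{1-e^{-\lambda_j t}},$$
which reassembles to the stated formula.

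Since this is essentially bookkeeping on top of the theorem, there is no real obstacle. The one place that requires care is the re-indexing step, where one must verify that the signs $(-1)^{g_j-i}$ and exponents $e^{-\lambda_j(g_j-i)t}$ align correctly after the change of summation variable, and that padding the sum to infinity with zero terms is legitimate precisely because $g_j$ is a positive integer (so the generalized binomial coefficient truncates).
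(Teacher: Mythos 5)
Your proposal is correct and follows essentially the same route the paper takes (and largely leaves implicit): the CDF of the maximum of independent variables is the product of the individual CDFs from Theorem~\ref{thcdfMEE1}, each factor $(1-e^{-\lambda_j t})^{g_j}$ is expanded by the ordinary binomial theorem, and the PDF follows by differentiating the product. The details you supply beyond the paper's sketch --- the reindexing via $\binom{g_j}{i}=\binom{g_j}{g_j-i}$, padding the sum to infinity using the vanishing of the generalized binomial coefficient $\binom{g_j}{i}$ for integer $g_j$ and $i>g_j$, and the logarithmic differentiation yielding $\sum_{j=1}^{n} g_j\lambda_j e^{-\lambda_j t}/(1-e^{-\lambda_j t})$ --- are exactly the bookkeeping the paper omits, and they are handled correctly.
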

Next, we consider the same problem by find a simplified expression of PDF
and CDF of the Maximum Exponentiated Exponential distribution but the
exponents of the EE distribution is a positive integer.

\begin{corollary}
\label{thcdfMEE}
Let $X_{i} \sim EE(\lambda_{i}, g_{i})$ where $g_{i} \in \mathbb{N}^{\ast}$, $\lambda_{i} > 0$, and $\lambda_{i} \neq \lambda_{j}$ for all $i \neq j, i = 1,2,...,n$. Let $N = \max\{X_{1}, X_{2},...,X_{n}\}$ denoted as $N \sim MEE(n, \overrightarrow{\lambda}, \overrightarrow{g})$, $\overrightarrow{\lambda} = (\lambda_{1}, \lambda_{2},...,\lambda_{n})$, $\overrightarrow{g} = (g_{1}, g_{2},...,g_{n})$. Then the CDF of $N$ is
\begin{equation}
F_{N}(t) = \prod_{j=1}^{n}\left(\sum_{i=0}^{g_{_{j}}}\binom{g_{_{j}}}{i}(-1)^{(g_{_{j}}-i)}e^{-\lambda_{j}(g_{_{j}}-i)t}\right)
\label{CDF MEExp}
\end{equation}
and the PDF is
\begin{equation}
f_{N}(t) = \left(\prod_{j=1}^{n}(1-e^{-\lambda_{j}t})^{g_{j}}\right)\left(\sum_{j=1}^{n}\frac{\lambda_{j}g_{j}e^{-\lambda_{j}t}}{1-e^{-\lambda_{j}t}}\right)
\label{PDF MEExp}
\end{equation}
\end{corollary}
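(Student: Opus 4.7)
The plan is to derive this as a direct specialization of Theorem \ref{thcdfMEE1}, exploiting the fact that now each exponent $g_j$ is a positive integer so the binomial expansion of $(1-e^{-\lambda_j t})^{g_j}$ truncates at $i=g_j$.

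First, I would reuse the product form $F_N(t) = \prod_{j=1}^{n}(1-e^{-\lambda_j t})^{g_j}$ from Theorem \ref{thcdfMEE1}, which follows from independence of the $X_j$ and the CDF (\ref{CDF EE}). Then, for each factor, I would apply the ordinary (finite) binomial theorem to $(1-e^{-\lambda_j t})^{g_j}$, writing it first as $\sum_{i=0}^{g_j}\binom{g_j}{i}(-1)^i e^{-\lambda_j i t}$, and then reindexing via $i\mapsto g_j-i$ to match the form $\sum_{i=0}^{g_j}\binom{g_j}{i}(-1)^{g_j-i}e^{-\lambda_j(g_j-i)t}$ appearing in (\ref{CDF MEExp}). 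This yields the claimed CDF.

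For the PDF (\ref{PDF MEExp}), rather than differentiating the expanded sum term by term, I would compute $f_N(t)=F_N'(t)$ by logarithmic differentiation of the product form. Writing $\log F_N(t) = \sum_{j=1}^{n} g_j \log(1-e^{-\lambda_j t})$ and differentiating gives
\begin{equation*}
\frac{f_N(t)}{F_N(t)} = \sum_{j=1}^{n} \frac{\lambda_j g_j e^{-\lambda_j t}}{1-e^{-\lambda_j t}},
\end{equation*}
and multiplying back by the product expression for $F_N(t)$ produces exactly (\ref{PDF MEExp}).

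There is no real obstacle here; the argument is essentially a bookkeeping exercise. The only point requiring minor care is the reindexing of the binomial expansion so that the sign and exponent align with the stated formula, and noting that the upper limit of summation collapses from $\infty$ (as in the preceding theorem's statement) to $g_j$ precisely because $g_j\in\mathbb{N}^{\ast}$ makes $\binom{g_j}{i}=0$ for $i>g_j$. Since both the CDF and PDF derivations rely only on differentiability of the product on $(0,\infty)$ and the positivity of $1-e^{-\lambda_j t}$ there, no further regularity issues arise.
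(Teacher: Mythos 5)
Your proposal is correct and follows essentially the same route as the paper: the paper's proof simply says to repeat the argument of Theorem \ref{thcdfMEE1} (the product form $F_N(t)=\prod_{j=1}^{n}(1-e^{-\lambda_j t})^{g_j}$ from independence) with the ordinary finite binomial expansion in place of the generalized one. Your write-up is in fact more complete than the paper's, since you make the reindexing $i\mapsto g_j-i$ explicit and actually derive the PDF via logarithmic differentiation of the product, a step the paper omits entirely.
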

\begin{proof}
The steps of the proof are similar to the proof of Theorem \ref{thcdfMEE1}
but we use instead of the generalized binomial expansion, the usual binomial
expansion where 
\begin{equation*}
(1-e^{-\lambda _{j}t})^{g_{j}}=\sum_{i=0}^{g_{_{j}}}\binom{g_{_{j}}}{i}%
\left( -1\right) ^{(g_{_{j}}-i)}e^{-\lambda _{j}(g_{_{j}}-i)t}.
\end{equation*}
\end{proof}

In the following we give the reliability and hazard function of the Maximum
Exponentiated Exponential distribution in the case when the exponents of the
EE distribution is positive real number and positive integer.

We have $R_{N}(t)=1-F_{N}\left( t\right) $, thus when the exponents of the
EE distribution is positive real number, from Theorem \ref{thcdfMEE1} we get 
\begin{equation*}
R_{N}(t) = 1-\prod_{j=1}^{n}\left(\sum_{i=0}^{\alpha_{j}}\binom{\alpha_{j}}{i}(-1)^{(\alpha_{j}-i)}e^{-\lambda_{j}(\alpha_{j}-i)t}\right)
\end{equation*}
Also when the exponents of the EE distribution is positive integer, from
Corollary \ref{thcdfMEE}, we get 
\begin{equation}
R_{N}(t) = 1-\prod_{j=1}^{n}\left(\sum_{i=0}^{g_{j}}\binom{g_{j}}{i}(-1)^{(g_{j}-i)}e^{-\lambda_{j}(g_{j}-i)t}\right)
\label{Rela MEE}
\end{equation}
and we have $h_{N}(t)=\frac{f_{N}\left( t\right) }{R_{N}\left( t\right) }$,
thus when the exponents of the EE distribution are positive real number,
from Theorem \ref{thcdfMEE1} we get 
\begin{equation*}
h_{N}(t) = \frac{\left(\prod_{j=1}^{n}(1-e^{-\lambda_{j}t})^{\alpha_{j}}\right)\left(\sum_{j=1}^{n}\frac{\lambda_{j}\alpha_{j}e^{-\lambda_{j}t}}{1-e^{-\lambda_{j}t}}\right)}{1-\prod_{j=1}^{n}\left(\sum_{i=0}^{\infty}\binom{\alpha_{j}}{i}(-1)^{(\alpha_{j}-i)}e^{-\lambda_{j}(\alpha_{j}-i)t}\right)}
\end{equation*}
Also when the exponents of the EE distribution is positive integer, from
Corollary \ref{thcdfMEE}, we get 
\begin{equation}
h_{N}(t) = \frac{\left(\prod_{j=1}^{n}(1-e^{-\lambda_{j}t})^{g_{j}}\right)\left(\sum_{j=1}^{n}\frac{\lambda_{j}g_{j}e^{-\lambda_{j}t}}{1-e^{-\lambda_{j}t}}\right)}{1-\prod_{j=1}^{n}\left(\sum_{i=0}^{g_{j}}\binom{g_{j}}{i}(-1)^{(g_{j}-i)}e^{-\lambda_{j}(g_{j}-i)t}\right)}
\label{Hazard MEE}
\end{equation}
\section{Exponentiated Hypoexponential Distribution}

Let $S_{n}\sim hypoexp(\overrightarrow{\alpha })\ $where $n\in 
\mathbb{N}
^{\ast }$ and $\alpha _{i}>0,$ $i=1,2,...,n$ and let $Y$ be the
exponentiated distribution of $S_{n}$, with exponent $k$. We name $Y$ as the
Exponentiated Hypoexponential distribution or EHypoexponential with exponent 
$k$ and denoted as $Y\sim EHypo\exp \left( \overrightarrow{\alpha },k\right)
,\overrightarrow{\alpha }=\left( \alpha _{1},\alpha _{2},...,\alpha
_{n}\right) $ and $k\in 
\mathbb{N}
^{\ast }$. Thus Exponentiated Hypoexponential distribution has $n+1$
parameters $\left( \alpha _{1},\alpha _{2},...,\alpha _{n},k\right) $. We
aim to examine this new distribution, and find simple and closed expressions
of some statistical functions. We start by relating the CDF of the
Exponentiated Hypoexponential distribution to CDF of the maximum
Exponentiated Exponential distribution, that we discussed and determined in
the prevous sections. As mentioned previously this relation will facilitate
our work in determing other expressions.
\begin{theorem}
\label{Thm CDF EHyp}
Let $S_{n} \sim \text{hypoexp}(\overrightarrow{\alpha})$ with coefficients $A_{i}$ defined in Equation \ref{EqAi}, and let $Y \sim \text{EHypoexp}(\overrightarrow{\alpha}, k)$, where $\overrightarrow{\alpha} = (\alpha_{1}, \alpha_{2}, ..., \alpha_{n})$ and $k \in \mathbb{N}^{\ast}$. Then the CDF of $Y$ is
\begin{equation}
F_{Y}(t) = \sum_{\overrightarrow{g}_{i} \in E_{k}} B_{i} F_{N_{i}}(t)
\label{CDF EHyp}
\end{equation}

where $E_{k}=\{\overrightarrow{g}_{i}=(g_{1,i},g_{2,i},\ldots,g_{n,i})\ |\ 0\leq g_{j,i}\leq k,\ \sum_{j=1}^{n}g_{j,i}=k\}$.
 and 
\begin{equation*}
B_{i} = \binom{k}{g_{1,i},\ldots,g_{n,i}} \prod_{j=1}^{n} A_{j}^{g_{j,i}}
\end{equation*}
and denoted by the coefficient of the Exponentiated Hypoexponential
distribution, and $F_{N_{i}}(t)$ is the CDF of $N_{i}\sim MEE(%
\overrightarrow{\alpha },\overrightarrow{g}_{i},k)$.

\begin{proof}
Let $S_{n}\sim \text{hypoexp}(\overrightarrow{\alpha})$ and $\alpha_i > 0$, $i = 1, 2, \ldots, n$. From Eq. \ref{Hyp CDF}, the CDF of $S_{n}$ is given by $F_{S_{n}}(t) = \sum_{i=1}^{n}A_{i}F_{X_{i}}(t)$, where $X_{i}\sim \text{Exp}(\alpha_{i})$ and $A_{i} = \prod_{j=1, j\neq i}^{n}\left(\frac{\alpha_{j}}{\alpha_{j}-\alpha_{i}}\right)$. 
By exponentiating the distribution $S_{n}$ mentioned in Theorem \ref{CDF ExpoRV}, we get $F_{Y}(t) = \left(\sum_{i=1}^{n}A_{i}F_{X_{i}}(t)\right)^{k}$. Using the multinomial expansion, we have:
\begin{eqnarray*}
F_{Y}(t) &=& \sum_{\overrightarrow{g}_{i}\in E_{k}}\binom{k}{g_{1,i},g_{2,i},\ldots,g_{n,i}}\left(A_{1}F_{X_{1}}(t)\right)^{g_{1,i}}\left(A_{2}F_{X_{2}}(t)\right)^{g_{2,i}} \ldots \left(A_{n}F_{X_{n}}(t)\right)^{g_{n,i}} \\
&=& \sum_{\overrightarrow{g}_{i}\in E_{k}}\binom{k}{g_{1,i},g_{2,i},\ldots,g_{n,i}}(A_{1}^{g_{1,i}}A_{2}^{g_{2,i}} \ldots A_{n}^{g_{n,i}})(F_{X_{1}}^{g_{1,i}}(t) F_{X_{2}}^{g_{2,i}}(t) \ldots F_{X_{n}}^{g_{n,i}}(t))
\end{eqnarray*}

where $E_{k}=\{\overrightarrow{g}_{i}=(g_{1,i},g_{2,i},\ldots,g_{n,i})\ |\ 0\leq g_{j,i}\leq k,\ \sum_{j=1}^{n}g_{j,i}=k\}$. But from Corollary \ref{thcdfMEE}, $F_{X_{1}}^{g_{1,i}}F_{X_{2}}^{g_{2,i}}\ldots F_{X_{n}}^{g_{n,i}}$ is the CDF of $N_{i}$, where $N_{i}\sim \text{MEE}(\overrightarrow{\alpha},\overrightarrow{g_{i}})$. Then we can write the CDF of $Y$ as

\begin{equation*}
F_{Y}(t) = \sum_{\overrightarrow{g}_{i}\in E_{k}}B_{i}F_{N_{i}}(t)
\end{equation*}

where $B_{i} = \binom{k}{g_{1,i},\ldots,g_{n,i}}(A_{1}^{g_{1,i}}A_{2}^{g_{2,i}} \ldots A_{n}^{g_{n,i}})$.
\end{proof}
\end{theorem}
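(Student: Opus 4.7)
The plan is to turn the definition of the exponentiated distribution into an algebraic identity and then reshape it with the multinomial theorem until every factor can be recognized as a known CDF. First, by the definition of $Y\sim EHypoexp(\overrightarrow{\alpha},k)$ together with Equation \ref{CDF ExpoRV}, I would write $F_{Y}(t)=\left[F_{S_{n}}(t)\right]^{k}$. Next, I would invoke the closed form \ref{Hyp CDF} of the Hypoexponential CDF to replace $F_{S_{n}}(t)$ by $\sum_{i=1}^{n}A_{i}F_{X_{i}}(t)$, where each $X_{i}\sim Exp(\alpha_{i})$. The problem then reduces to understanding the $k$-th power of a finite linear combination.

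The second step is to apply the multinomial theorem to $\Bigl(\sum_{i=1}^{n}A_{i}F_{X_{i}}(t)\Bigr)^{k}$. This produces exactly the index set $E_{k}$ of compositions of $k$ into $n$ nonnegative parts, the multinomial coefficient $\binom{k}{g_{1,i},\dots,g_{n,i}}$, the factor $\prod_{j=1}^{n}A_{j}^{g_{j,i}}$, and the remaining analytic factor $\prod_{j=1}^{n}F_{X_{j}}(t)^{g_{j,i}}$. Grouping the first three pieces yields the promised constants $B_{i}$, so the only remaining task is to identify the analytic factor.

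The key identification step is to observe that $F_{X_{j}}(t)^{g_{j,i}}=(1-e^{-\alpha_{j}t})^{g_{j,i}}$ is precisely the CDF of an $EE(\alpha_{j},g_{j,i})$ random variable in the sense of Equation \ref{CDF EE}, so the product $\prod_{j=1}^{n}F_{X_{j}}(t)^{g_{j,i}}$ is the product of CDFs of independent Exponentiated Exponentials with rate vector $\overrightarrow{\alpha}$ and exponent vector $\overrightarrow{g}_{i}$. By independence, the CDF of their maximum equals that product, which by Corollary \ref{thcdfMEE} is exactly $F_{N_{i}}(t)$ with $N_{i}\sim MEE(\overrightarrow{\alpha},\overrightarrow{g}_{i})$. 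Substituting this identification back gives $F_{Y}(t)=\sum_{\overrightarrow{g}_{i}\in E_{k}}B_{i}F_{N_{i}}(t)$, as claimed.

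The main obstacle I anticipate is bookkeeping rather than analysis: one must be careful that the parameter vector appearing in $N_{i}$ carries the rate vector $\overrightarrow{\alpha}$ inherited from the underlying exponentials while the exponent vector $\overrightarrow{g}_{i}$ is the multi-index of summation, and that the degenerate indices $g_{j,i}=0$ (which contribute the factor $1$ and effectively remove coordinate $j$ from the maximum) are handled consistently with the convention used in Corollary \ref{thcdfMEE}. Once this indexing is pinned down, the proof is a one-line application of the multinomial theorem combined with the earlier MEE result.
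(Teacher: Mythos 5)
Your proposal is correct and follows essentially the same route as the paper's own proof: substitute the Hypoexponential CDF $\sum_{i=1}^{n}A_{i}F_{X_{i}}(t)$ into $F_{Y}(t)=[F_{S_{n}}(t)]^{k}$, expand by the multinomial theorem, and identify each product $\prod_{j}F_{X_{j}}(t)^{g_{j,i}}$ as the MEE CDF via Corollary \ref{thcdfMEE}. Your added remark about handling the degenerate exponents $g_{j,i}=0$ is a point the paper passes over silently, but it does not change the argument.
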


In the following theorem, we determine a simplified expression of PDF of the
Exponentiated Hypoexponential distribution of different parameters.

\begin{theorem}
\label{Thm PDF EHyp}Let $Y\sim EHypo\exp \left( n,\overrightarrow{\alpha }%
,k\right) ,\overrightarrow{\alpha }=\left( \alpha _{1},\alpha
_{2},...,\alpha _{n}\right) $ for $1\leq i\leq n$. Then the PDF of $Y$ is 
\begin{equation}
f_{Y}(t)=\sum_{\overrightarrow{g}_{i}\in E_{k}}B_{i}f_{N_{i}}(t)
\label{PDF EHyp}
\end{equation}

\begin{proposition}
Let $B_{i}$ be the coefficient of the Exponentiated Hypoexponential distribution, defined as $B_{i}=\binom{k}{g_{1,i},g_{2,i},\ldots,g_{n,i}} (A_{1}^{g_{1,i}} A_{2}^{g_{2,i}} \ldots A_{n}^{g_{n,i}})$, where $A_{i} = \prod_{j=1, j\neq i}^{n} \left(\frac{\alpha_{j}}{\alpha_{j}-\alpha_{i}}\right)$.
Let $E_{k} = \left\{\overrightarrow{g}_{i} = (g_{1,i}, g_{2,i}, \ldots, g_{n,i}) \middle| 0 \leq g_{j,i} \leq k, \ \sum_{j=1}^{n} g_{j,i} = k\right\}$, and let $N_{i} \sim \text{MEE}(\overrightarrow{\alpha}, \overrightarrow{g_{i}})$.
Then, $\sum_{\overrightarrow{g}_{i} \in E_{k}} B_{i} = 1$.
\end{proposition}

\begin{proof}
From Theorem \ref{Thm CDF EHyp}, we have $F_{Y}\left( t\right) \
=\sum_{E_{k}}B_{i}F_{N_{i}}(t)$ and when we differentiate both sides we
obtain knowing 
\begin{equation*}
f_{Y}(t)=F_{Y}^{\prime }\left( t\right) \ =\frac{d\left( \sum_{%
\overrightarrow{g}_{i}\in E_{k}}B_{i}F_{N_{i}}(t)\right) }{dt}=\sum_{%
\overrightarrow{g}_{i}\in E_{k}}B_{i}\frac{dF_{N_{i}}(t)}{dt}=\sum_{%
\overrightarrow{g}_{i}\in E_{k}}B_{i}f_{N_{i}}(t).
\end{equation*}
\end{proof}
\end{theorem}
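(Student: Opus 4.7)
The plan is to obtain the PDF of $Y$ by differentiating the CDF expression already established in Theorem \ref{Thm CDF EHyp}. Since both sides of that identity are differentiable functions of $t$, and since $Y$ is a continuous random variable (being an exponentiated version of the absolutely continuous hypoexponential $S_n$), we have $f_Y(t) = \frac{d}{dt} F_Y(t)$, so it suffices to differentiate the right-hand side of equation \eqref{CDF EHyp} termwise.

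First, I would recall from Theorem \ref{Thm CDF EHyp} that
\[
F_Y(t) = \sum_{\overrightarrow{g}_i \in E_k} B_i\, F_{N_i}(t),
\]
where the index set $E_k$ is the collection of nonnegative integer compositions of $k$ into $n$ parts. The key observation is that $E_k$ is a \emph{finite} set (its cardinality equals $\binom{n+k-1}{k}$), so termwise differentiation is legitimate without any convergence concerns, and the coefficients $B_i$ are constants not depending on $t$. Next, I would differentiate each summand: by Corollary \ref{thcdfMEE}, $N_i \sim MEE(\overrightarrow{\alpha}, \overrightarrow{g}_i)$ has a well-defined density $f_{N_i}(t)$ with $F'_{N_i}(t) = f_{N_i}(t)$, where $F_{N_i}$ and $f_{N_i}$ are given explicitly by equations \eqref{CDF MEExp} and \eqref{PDF MEExp}.

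Combining these ingredients gives
\[
f_Y(t) \;=\; \frac{d}{dt} F_Y(t) \;=\; \frac{d}{dt} \sum_{\overrightarrow{g}_i \in E_k} B_i\, F_{N_i}(t) \;=\; \sum_{\overrightarrow{g}_i \in E_k} B_i\, F'_{N_i}(t) \;=\; \sum_{\overrightarrow{g}_i \in E_k} B_i\, f_{N_i}(t),
\]
which is exactly the claimed expression \eqref{PDF EHyp}.

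There is essentially no analytic obstacle here, since the sum is finite and each $F_{N_i}$ is smooth on $(0,\infty)$; the substantive content of the theorem is already packaged inside Theorem \ref{Thm CDF EHyp} and Corollary \ref{thcdfMEE}. The only thing worth flagging is the mild housekeeping point that the identity $f_Y = F_Y'$ is justified because $Y$ is a continuous transformation of an absolutely continuous variable, so the CDF is differentiable almost everywhere and agrees with $\int_0^t f_Y$; this is automatic once both sides of the differentiated identity are recognized as densities integrating to one over $(0,\infty)$.
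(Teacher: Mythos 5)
Your proposal is correct and follows essentially the same route as the paper: both obtain $f_Y$ by termwise differentiation of the finite sum $F_Y(t)=\sum_{\overrightarrow{g}_i\in E_k}B_i F_{N_i}(t)$ from Theorem \ref{Thm CDF EHyp}, using $F_{N_i}'=f_{N_i}$. Your added remarks on the finiteness of $E_k$ and the legitimacy of $f_Y=F_Y'$ are sensible housekeeping but do not change the argument.
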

\begin{proposition}
\label{sum equal 1}
Let $B_{i}$ be the coefficient of the Exponentiated Hypoexponential distribution. Then $\sum_{\overrightarrow{g}_{i} \in E_{k}} B_{i} = 1$, where $E_{k} = \{\overrightarrow{g}_{i} = (g_{1,i}, g_{2,i}, ..., g_{n,i}) \ |\ 0 \leq g_{j,i} \leq k, \ \sum_{j=1}^{n} g_{j,i} = k\}$.
\end{proposition}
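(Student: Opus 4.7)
The plan is to recognize the sum as a disguised multinomial expansion and reduce it to a statement about the single-exponent sum $\sum_{i=1}^{n} A_{i}$.

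First I would apply the multinomial theorem in reverse. Since
\[
\sum_{\overrightarrow{g}_{i}\in E_{k}} B_{i} \;=\; \sum_{\overrightarrow{g}_{i}\in E_{k}} \binom{k}{g_{1,i},g_{2,i},\ldots,g_{n,i}} \, A_{1}^{g_{1,i}} A_{2}^{g_{2,i}} \cdots A_{n}^{g_{n,i}},
\]
and $E_{k}$ is by definition the index set of all compositions of $k$ into $n$ non-negative parts, the right-hand side is exactly the multinomial expansion of $(A_{1}+A_{2}+\cdots+A_{n})^{k}$. So the claim reduces to showing that $\sum_{i=1}^{n} A_{i}=1$.

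The cleanest way to establish $\sum_{i=1}^{n} A_{i}=1$ is to evaluate the Hypoexponential CDF at infinity. By Equation \ref{Hyp CDF}, $F_{S_{n}}(t)=\sum_{i=1}^{n} A_{i}\, F_{X_{i}}(t)$, where each $X_{i}\sim \operatorname{Exp}(\alpha_{i})$. Since $S_{n}$ is a genuine random variable, $F_{S_{n}}(t)\to 1$ as $t\to\infty$, while each $F_{X_{i}}(t)\to 1$. Therefore passing to the limit yields $\sum_{i=1}^{n} A_{i} = 1$. (Alternatively, one can prove this identity directly as a partial-fraction statement: the coefficients $A_{i}=\prod_{j\neq i}\alpha_{j}/(\alpha_{j}-\alpha_{i})$ arise from decomposing $\prod_{j} \alpha_{j}/(s+\alpha_{j})$ and evaluating at a suitable point, but the probabilistic argument is more direct.)

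Combining the two steps gives $\sum_{\overrightarrow{g}_{i}\in E_{k}} B_{i} = (A_{1}+\cdots+A_{n})^{k} = 1^{k} = 1$, which is the claim. The only subtle point, and the one I would be careful about, is making sure the indexing set $E_{k}$ matches the multinomial theorem exactly (it does: distinct tuples $\overrightarrow{g}_{i}$ with entries summing to $k$ are precisely the terms in the multinomial expansion, each appearing once with its multinomial coefficient), so no combinatorial overcounting slips in. Everything else is routine.
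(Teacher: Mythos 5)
Your proof is correct, but it takes a different route from the paper's. The paper argues directly at the level of the Exponentiated Hypoexponential distribution: it invokes the decomposition $F_{Y}(t)=\sum_{\overrightarrow{g}_{i}\in E_{k}}B_{i}F_{N_{i}}(t)$ from Theorem \ref{Thm CDF EHyp}, lets $t\to\infty$, and uses $F_{Y}(t)\to 1$ and $F_{N_{i}}(t)\to 1$ to conclude $\sum B_{i}=1$. You instead collapse the sum algebraically via the multinomial theorem to $\left(\sum_{i=1}^{n}A_{i}\right)^{k}$ and then establish $\sum_{i=1}^{n}A_{i}=1$ by the analogous limit argument applied to the simpler Hypoexponential CDF $F_{S_{n}}(t)=\sum_{i}A_{i}F_{X_{i}}(t)$ (or, as you note, by partial fractions). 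The two arguments are close cousins --- the paper's Theorem \ref{Thm CDF EHyp} was itself obtained by the same multinomial expansion you are reversing --- but yours has the advantage of not depending on that theorem at all: it only needs the elementary identity on the $A_{i}$'s, which is a well-known property of the Hypoexponential coefficients, and so it would survive even if one distrusted the $F_{N_{i}}$ bookkeeping. The paper's version is shorter given its prior results and avoids having to justify $\sum A_{i}=1$ separately. Your care about the indexing set $E_{k}$ matching the multinomial theorem is warranted and correctly resolved; no gap remains.
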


\begin{proof}
Let $F_{Y}(t)$ and $F_{N_{i}}(t)$ be the CDFs of $\text{EHypoexp}(n, \overrightarrow{\alpha}, k)$ and $\text{MEE}(\overrightarrow{\alpha} \overrightarrow{g_{i}}, k)$, respectively. The CDF for any random variable $X$, $P(X < t) = 1$ as $t \rightarrow +\infty$. Thus, $\lim_{t \rightarrow +\infty} F_{Y}(t) = 1$ and $\lim_{t \rightarrow +\infty} F_{N_{i}}(t) = 1$. But from Eq. (\ref{CDF EHyp}), we have: 
\begin{align*}
    F_{Y}(t) &= \sum_{\overrightarrow{g}_{i} \in E_{k}} B_{i} F_{N_{i}}(t) \\
    \lim_{t \rightarrow +\infty} F_{Y}(t) &= \lim_{t \rightarrow +\infty} \sum_{\overrightarrow{g}_{i} \in E_{k}} B_{i} F_{N_{i}}(t) \\
    &= \sum_{\overrightarrow{g}_{i} \in E_{k}} B_{i} \lim_{t \rightarrow +\infty} F_{N_{i}}(t).
\end{align*}
Hence, $\sum_{\overrightarrow{g}_{i} \in E_{k}} B_{i} = 1$.
\end{proof}
\begin{theorem}
\label{Thm Rel EHyp}Let $S_{n}\sim Hypoexp(\overrightarrow{\alpha })\ $where 
$n\in 
\mathbb{N}
^{\ast }$ and $\alpha _{i}>0,$ $i=1,2,...,n$ and let $Y$ be the
Exponentiated Hypoexponential distribution with exponent parameter $k\in 
\mathbb{N}
^{\ast }$ denoted as $Y\sim EHypo\exp \left( n,\overrightarrow{\alpha }%
,k\right) ,\overrightarrow{\alpha }=\left( \alpha _{1},\alpha
_{2},...,\alpha _{n}\right) .$ Then the reliability function of $Y$ is
\end{theorem}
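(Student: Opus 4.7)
The plan is to compute $R_Y(t) = 1 - F_Y(t)$ by directly invoking the CDF decomposition of Theorem \ref{Thm CDF EHyp} together with the normalization identity $\sum_{\overrightarrow{g}_i \in E_k} B_i = 1$ from Proposition \ref{sum equal 1}, and then reduce each summand using the MEE reliability formula already stated in Equation (\ref{Rela MEE}).

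First I would write, by definition of the reliability function and by Theorem \ref{Thm CDF EHyp},
\[
R_Y(t) = 1 - F_Y(t) = 1 - \sum_{\overrightarrow{g}_i \in E_k} B_i F_{N_i}(t).
\]
Next I would use Proposition \ref{sum equal 1} to rewrite the constant $1$ as $\sum_{\overrightarrow{g}_i \in E_k} B_i$ and then pull the common sum out, obtaining
\[
R_Y(t) = \sum_{\overrightarrow{g}_i \in E_k} B_i \bigl( 1 - F_{N_i}(t) \bigr) = \sum_{\overrightarrow{g}_i \in E_k} B_i\, R_{N_i}(t).
\]
Finally I would substitute the closed form for $R_{N_i}(t)$ from Equation (\ref{Rela MEE}), noting that $N_i \sim \mathrm{MEE}(\overrightarrow{\alpha}, \overrightarrow{g}_i)$ with rate parameters $\alpha_j$ and \emph{integer} exponents $g_{j,i}$, which yields
\[
R_Y(t) = 1 - \sum_{\overrightarrow{g}_i \in E_k} B_i \prod_{j=1}^{n} \sum_{l=0}^{g_{j,i}} \binom{g_{j,i}}{l}(-1)^{g_{j,i}-l} e^{-\alpha_j (g_{j,i}-l) t}.
\]

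There is no real obstacle here, as the derivation is essentially a two-line algebraic manipulation that rides entirely on earlier results. The one point of care is bookkeeping: I must make sure to match the integer-exponent case of the MEE formula (rather than the positive-real series version) when substituting $R_{N_i}$, because each $g_{j,i}$ is a nonnegative integer by the definition of $E_k$, and to check that the rate parameters of the underlying exponentials carried through the construction are the $\alpha_j$ inherited from the Hypoexponential, so no relabeling is needed.
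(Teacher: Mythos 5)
Your proof is correct and follows essentially the same route as the paper's: both write $R_Y(t)=1-F_Y(t)$, invoke Theorem \ref{Thm CDF EHyp}, and use $\sum_{\overrightarrow{g}_i\in E_k}B_i=1$ from Proposition \ref{sum equal 1} to absorb the constant into the sum and obtain $R_Y(t)=\sum_{\overrightarrow{g}_i\in E_k}B_iR_{N_i}(t)$. The paper's proof additionally records the accompanying hazard function via $h_Y(t)=f_Y(t)/R_Y(t)$, which you omit, but for the reliability statement itself your argument matches the paper's line for line.
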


\begin{equation}
R_{Y}\left( t\right) =\sum_{\overrightarrow{g}_{i}\in E_{k}}B_{i}R_{N_{i}}(t)
\label{Rela EHyp}
\end{equation}

and hazard function of $Y$ is

\begin{equation}
h_{Y}\left( t\right) =\frac{\sum_{\overrightarrow{g}_{i}\in
E_{k}}B_{i}f_{N_{i}}(t)}{\sum_{\overrightarrow{g}_{i}\in
E_{k}}B_{i}R_{N_{i}}(t)}  \label{Hazard EHyp}
\end{equation}

\begin{proof}
We have $R_{Y}(t)=1-F_{Y}\left( t\right) $, then and by Theorem \ref{Thm CDF
EHyp}, we get $R_{Y}(t)=1-\sum_{\overrightarrow{g}_{i}\in
E_{k}}B_{i}F_{N_{i}}(t)$. Now, since $F_{N_{i}}(t)=1-R_{N_{i}}\left(
t\right) $ and $\sum_{\overrightarrow{g}_{i}\in E_{k}}B_{i}=1$ from
Proposition \ref{sum equal 1}, we obtain that $R_{Y}\left( t\right) =\sum_{%
\overrightarrow{g}_{i}\in E_{k}}B_{i}R_{N_{i}}(t).$

On the other hand, the hazard function of $Y$ is given as $h_{Y}(t)=\frac{%
f_{Y}(t)}{R_{Y}(t)}$. Also from Theorem \ref{PDF EHyp} we have the
expression of $f_{Y}\left( t\right) $ and substituting the expression of $%
R_{Y}(t)$ in Eq. (\ref{Rela EHyp}) we the obtain our result.
\end{proof}

\subsection{Applications}

\begin{example}
Suppose that $X_{1},$ $X_{2}\ $and $X_{3}$ are the task times of $n=3$ Markov chain that are Exponentially distributed as $X_{j}\sim Exp(\alpha_{j}),$ $j=1,2,3$ with $\alpha_{1}=5,$ $\alpha_{2}=4,$ $\alpha_{3}=3$. $S_{3}=X_{1}+X_{2}+X_{3}$ is the proposed total task time of this system which is a Hypoexponential distribution. Now, let $Y$ be the Exponentiated Hypoexponential distribution with exponent parameter $k=3$. From Theorem \ref{PDF EHyp} 
\begin{equation*}
f_{Y}(t) = \sum_{\overrightarrow{g}_{i}\in E_{k}}B_{i}f_{N_{i}}(t).
\end{equation*}

First $E_{3}=\{\overrightarrow{g}_{i}=(g_{1,i},g_{2,i},g_{3,i})/0\leq g_{j,i}\leq 3,\sum_{j=1}^{3}g_{j,i}=3\}=\{\overrightarrow{g}_{1}=(3,0,0),$ $\overrightarrow{g}_{2}=(0,3,0),$ $\overrightarrow{g}_{3}=(0,0,3),$ $\overrightarrow{g}_{4}=(2,1,0),$ $\overrightarrow{g}_{5}=(2,0,1),$ $\overrightarrow{g}_{6}=(1,2,0),$ $\overrightarrow{g}_{7}=(0,2,1),$ $\overrightarrow{g}_{8}=(1,0,2),$ $\overrightarrow{g}_{9}=(0,1,2),$ $\overrightarrow{g}_{10}=(1,1,1)\}$.

Having $A_{i}=\prod\limits_{j=1, j \neq i}^{n} (\frac{\alpha_{j}}{\alpha_{j}-\alpha_{i}}),$ then $A_{1}=6,$ $A_{2}=-15,$ $A_{3}=10,$

and $B_{i}=\binom{3}{g_{1,i},g_{2,i},g_{3,i}} (A_{1}^{g_{1,i}}A_{2}^{g_{2,i}}A_{3}^{g_{3,i}}),$

and the PDF of $N_{i}\sim MEE(\overrightarrow{\alpha},\overrightarrow{g}_{i})$ derived from \ref{thcdfMEE1} is 
\begin{equation*}
f_{N_{i}}(t) = \prod\limits_{j=1}^{3} (1-e^{-\alpha_{j}t})^{g_{i,j}} \times \sum_{j=1}^{3} \frac{g_{i,j}\alpha_{j}e^{-\alpha_{j}t}}{1-e^{-g\alpha_{j}t}},
\end{equation*}
and then the PDF of $Y$ will have the form 
\begin{eqnarray*}
f_{Y}(t) &=& \sum_{\overrightarrow{g}_{i}\in E_{k}} \binom{3}{g_{1,i},g_{2,i},g_{3,i}} (6)^{g_{1,i}} (-15)^{g_{2,i}} (10)^{g_{3,i}} \\
&& \times \prod\limits_{j=1}^{3} (1-e^{-\alpha_{j}t})^{g_{i,j}} \sum_{j=1}^{3} \frac{g_{i,j}\alpha_{j}e^{-\alpha_{j}t}}{1-e^{-g\alpha_{j}t}} \\
&=& 90e^{-15t}(-1+e^{t})^{8}(6+3e^{t}+e^{2t})^{2}.
\end{eqnarray*}
Its figure is shown in Fig. \ref{fig:1}.
\end{example}
\begin{figure}[htbp]
    \centering
\includegraphics[width=0.6\textwidth]{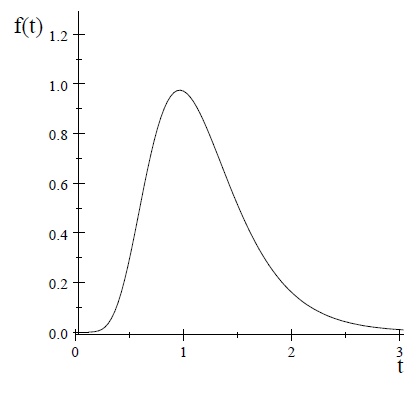}
    \caption{The PDF of the EHyp.}
    \label{fig:1}
\end{figure}

Next, we use Theorem \ref{CDF EHyp} to find the CDF of $Y$ as
\begin{eqnarray*}
F_Y(t) &=& \sum_{\vec{g}_i \in E_k} B_i F_{N_i}(t) \\
&=& \sum_{\vec{g}_i \in E_k} \binom{3}{g_{1,i}, g_{2,i}, g_{3,i}} (6)^{g_{1,i}} (-15)^{g_{2,i}} (10)^{g_{3,i}} \times \prod_{j=1}^{3} (1 - e^{-\lambda_j t})^{\alpha_j} \\
&=& e^{-15t} (-1 + e^{t})^{9} (6 + 3e^{t} + e^{2t})^{3}
\end{eqnarray*}

Note that, if we integrate $f_{Y}(t)$, we get the same result. The figure of 
$F_{Y}(t)$ is represented in Fig \ref{fig:2}.

\begin{figure}[htbp]
    \centering
\includegraphics[width=0.6\textwidth]{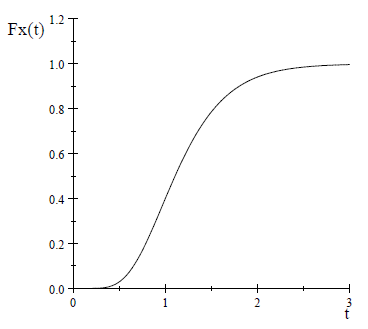}
    \caption{The CDF of the EHyp.}
    \label{fig:2}
\end{figure}

Now, from Theorem \ref{Rela EHyp}%
, we can find reliability function for $Y$ as
\begin{gather*}
R_{Y}(t) = \sum_{\overrightarrow{g}_{i}\in E_{k}}B_{i}R_{N_{i}}(t).
\end{gather*}

However, the expression
of reliability function for Maximum Exponentiated Exponential Distribution
is given in Eq. (\ref{Rela MEE}) as $R_{N}(t) = 1 - \prod_{j=1}^{n} \sum_{i=0}^{g_{j}} \binom{g_{j}}{i} (-1)^{i} e^{-\lambda_{i}(n - g_{j})t_{j}}.$
\begin{eqnarray*}
R_{Y}\left( t\right) &=&\sum_{\overrightarrow{g}_{i}\in
E_{k}}B_{i}R_{N_{i}}(t) \\
&=&1-e^{(-15t)}(-1+e^{t})^{9}(6+3e^{t}+e^{(2t)})^{3}
\end{eqnarray*}

The figure of $R_{Y}(t)$ is represented in Fig. \ref{fig:3}.

\begin{figure}[htbp]
    \centering
\includegraphics[width=0.6\textwidth]{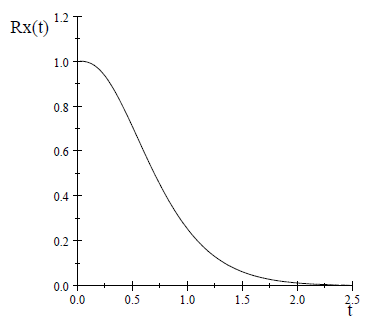}
    \caption{The Reliability of the EHyp.}
    \label{fig:3}
\end{figure}

To understand the concept of $
R_{Y}(2)=0.0588886,$ which means that the probability that the system is
still functioning (survive)\ after 2 years is $0.0588886$ which$\ $is a low
probability. Thus, we can consider that this system will stop after 2 years
or more.

Next, to find the hazard function of $Y,$ use Theorem \ref{Hazard EHyp}. The
hazard's expression is 
\begin{eqnarray*}
h_{Y}(t) &=&\frac{f_{Y}(t)}{R_{Y}(t)} \\
&=&\frac{\sum_{\overrightarrow{g}_{i}\in E_{k}}B_{i}f_{N_{i}}(t)}{\sum_{%
\overrightarrow{g}_{i}\in E_{k}}B_{i}R_{N_{i}}(t)} \\
&=&\frac{%
3(1-6e^{(-5t)}+15e^{(-4t)}-10e^{(-3t)})^{2}(30e^{(-5t)}-60e^{(-4t)}+30e^{(-3t)})%
}{1-e^{(-15t)}(-1+e^{t})^{9}(6+3e^{t}+e^{(2t)})^{3}}
\end{eqnarray*}%

\begin{figure}[htbp]
    \centering
\includegraphics[width=0.6\textwidth]{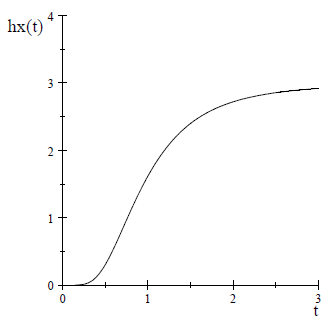}
    \caption{The Hazard of the EHyp.}
    \label{fig:4}
\end{figure}

In Figs. \ref{fig:3} and \ref{fig:4}, we show the curves of reliability and hazard
functions of $Y.$ 
Note that, Fig. \ref{fig:4} shows that the hazard function of $Y$ is not
constant unlike the hazard of the Exponential distribution. The hazard
function of $Y$ is a monotonic increasing function of $t$, then the system
is described to be "wear" out with time, see, \cite{Phd}. In other words,
the system is damaged as a result of normal wearing out or aging with time.
Fig. \ref{fig:4} for $h_{Y}(t)$ reveals that the system takes approximately 10
years to wear out.

\section{Maximum Likelihood Estimation Of Parameters In Exponentiated Hypoexponential Distribution}

\addtolength{\textheight}{.5in}%

Let $X_{1},X_{2},...,X_{N}$ be a random sample from the $EHypo\left( 
\overrightarrow{\alpha },k\right) $ distribution with unknown parameters $%
\alpha _{1},\alpha _{2},...,\alpha _{n},k$. We consider the maximum
likelihood estimation providing the maximum likelihood estimators (MLEs) $%
\widehat{\alpha }_{1},\widehat{\alpha }_{2},...,\widehat{\alpha }_{n},%
\widehat{k}$ for the parameters $\alpha _{1},\alpha _{2},...,\alpha _{n},k$.

Let us recall that the MLEs have some statistical desirable properties
(under regularity conditions) as the sufficiency, invariance, consistency,
efficiency and asymptotic normality. Using the observed information matrix,
asymptotic confidence interval for $\widehat{\alpha }_{1},\widehat{\alpha }%
_{2},...,\widehat{\alpha }_{n},$ and $\widehat{k}$ can be constructed.

Let $Y\sim EHyp\left( \overrightarrow{\alpha },k\right) $, $\alpha _{i}>0,$ $%
i=1,2,...,n,$ and $N$ is a size of a random sample. Then the maximum
likelihood estimators of $\alpha _{1},\alpha _{2},...,\alpha _{n},$ and $k$
is denoted respectively by $\widehat{\alpha }_{1},\widehat{\alpha }_{2},...,%
\widehat{\alpha }_{n}$ and $\widehat{k}$ and has the form:%
\begin{equation*}
\widehat{k}=\frac{-N}{\sum\limits_{j=1}^{N}\log \left( \sum\limits_{i=1}^{n}A_{i}F_{X_{i}}(x_{j})\right) }
\end{equation*}

and $\widehat{a}_{p}$ verify the following implicit equation:

\begin{equation*}
\begin{split}
& \sum_{j=1}^{N} \frac{\sum_{i=1}^{n} \left( \left( \sum_{v=1,v\neq q}^{n} \frac{1}{\alpha _{v}-\alpha _{q}} \right) A_{q} f_{X}(x_{i}) + A_{i} \left( (1-\alpha _{q}x_{j}) e^{-\alpha _{q}x_{j}} \right) \right) }{\sum_{i=1}^{n} A_{i} f_{X}(x_{i})} \\
& + (k-1) \sum_{j=1}^{N} \frac{\sum_{i=1}^{n} \left( \left( \sum_{v=1,v\neq q}^{n} \frac{1}{\alpha _{v}-\alpha _{q}} \right) A_{q} F_{X}(x_{i}) + A_{i} (x_{j}e^{-\alpha _{p}x_{j}}) \right) }{\sum_{i=1}^{n} A_{i} F_{X}(x_{i})} \equiv 0
\end{split}
\end{equation*}

In finding the estimators, the first thing we will do is write the
probability density function as a function of $k$.%
\begin{eqnarray*}
f_{Y}\left( x_{i},\alpha _{i},k\right) &=&k\left( \sum\limits_{i=1}^{n}A_{i}f_{X_{i}}(x_{i})\right) \left( \sum\limits_{i=1}^{n}A_{i}F_{X_{i}}(x_{i})\right) ^{k-1} \\
&=&k\left( \sum\limits_{i=1}^{n}\left( \prod\limits_{j=1,j\neq i}^{n}(\frac{\alpha _{j}}{\alpha _{j}-\alpha _{i}})\right) \left( \alpha _{i}e^{-\alpha _{i}x_{i}}\right) \right) \\
&&\times \left( \sum\limits_{i=1}^{n}\left( \prod\limits_{j=1,j\neq i}^{n}(\frac{\alpha _{j}}{\alpha _{j}-\alpha _{i}})\right) \left( 1-e^{-\alpha _{i}x_{i}}\right) \right) ^{k-1}
\end{eqnarray*}
for $k>0$. Now, that makes the likelihood function:%
\begin{eqnarray*}
L\left( k,\overrightarrow{\alpha }\right) &=&\prod\limits_{j=1}^{N}k\left(
\sum\limits_{i=1}^{n}A_{i}f_{X_{i}}(x_{j})\right) \left(
\sum\limits_{i=1}^{n}A_{i}F_{X_{i}}(x_{j})\right) ^{k-1} \\
&=&k^{N}\left(
\prod\limits_{j=1}^{N}\sum\limits_{i=1}^{n}A_{i}f_{X_{i}}(x_{j})\right)
\left(
\prod\limits_{j=1}^{N}\sum\limits_{i=1}^{n}A_{i}F_{X_{i}}(x_{j})\right)
^{k-1} \\
&=&k^{N}\left( \prod\limits_{j=1}^{N}\sum\limits_{i=1}^{n}\left(
\prod\limits_{j=1,j\neq i}^{n}\left(\frac{\alpha _{j}}{\alpha _{j}-\alpha _{i}}\right)\right) \left( \alpha _{i}e^{-\alpha _{i}x_{i}}\right) \right) \\
&&\times \left( \prod\limits_{j=1}^{N}\left( \prod\limits_{j=1,j\neq
i}^{n}\left(\frac{\alpha _{j}}{\alpha _{j}-\alpha _{i}}\right)\right) \left(
1-e^{-\alpha _{i}x_{i}}\right) \right) ^{k-1}
\end{eqnarray*}

Where $i=1,2,...,n$ ~~and ~$j=1,2,...,N.~$and therefore the log of the
likelihood function:%
\begin{eqnarray*}
\log \left( L\left( k,\overrightarrow{\alpha }\right) \right) &=&\log \left(
k^{N}\left(
\prod\limits_{j=1}^{N}\sum\limits_{i=1}^{n}A_{i}f_{X_{i}}(x_{j})\right)
\left(
\prod\limits_{j=1}^{N}\sum\limits_{i=1}^{n}A_{i}F_{X_{i}}(x_{j})\right)
^{k-1}\right) \\
&=&N\log \left( k\right) +\log \left(
\prod\limits_{j=1}^{N}\sum\limits_{i=1}^{n}A_{i}f_{X_{i}}(x_{j})\right) \\
&&+\log \left( \left(
\prod\limits_{j=1}^{N}\sum\limits_{i=1}^{n}A_{i}F_{X_{i}}(x_{j})\right)
^{k-1}\right) \\
&=&N\log \left( k\right) +\sum\limits_{j=1}^{N}\log \left(
\sum\limits_{i=1}^{n}\left( \prod\limits_{j=1,j\neq i}^{n}\left(\frac{\alpha
_{j}}{\alpha _{j}-\alpha _{i}}\right)\right) \left( \alpha _{i}e^{-\alpha
_{i}x_{i}}\right) \right) \\
&&+(k-1)\left( \sum\limits_{j=1}^{N}\log \left( \left(
\prod\limits_{j=1,j\neq i}^{n}\left(\frac{\alpha _{j}}{\alpha _{j}-\alpha _{i}}%
\right)\right) \left( 1-e^{-\alpha _{i}x_{i}}\right) \right) \right)
\end{eqnarray*}

Now, upon taking the partial derivative of $\log \left( L\left( k,\alpha
_{i}\right) \right) $ with respect to the parameters $k$ and then setting it
to $0$.
\begin{equation*}
\frac{d}{dk}\log \left( L\left( k,\alpha _{i}\right) \right) =0
\end{equation*}
\begin{equation*}
\frac{d}{dk}\left( n\log \left( k\right) +\sum\limits_{j=1}^{n}\log \left(
\sum\limits_{i=1}^{n}A_{i}f_{X_{i}}(x_{j})\right) +(k-1)\left(
\sum\limits_{j=1}^{n}\log \left(
\sum\limits_{i=1}^{n}A_{i}F_{X_{i}}(x_{j})\right) \right) \right) \equiv 0
\end{equation*}
\begin{equation*}
\begin{array}{c}
\frac{d}{dk}(N\log \left( k\right) )+\frac{d}{dk}\left(
\sum\limits_{j=1}^{N}\log \left(
\sum\limits_{i=1}^{n}A_{i}f_{X_{i}}(x_{j})\right) \right) \\
+\frac{d}{dk}\left( (k-1)\left( \sum\limits_{j=1}^{N}\log \left(
\sum\limits_{i=1}^{n}A_{i}F_{X_{i}}(x_{j})\right) \right) \right) \equiv 0
\end{array}
\end{equation*}
\begin{equation*}
\frac{N}{k}+\sum\limits_{j=1}^{N}\log \left(
\sum\limits_{i=1}^{n}A_{i}F_{X_{i}}(x_{j})\right) \equiv 0
\end{equation*}
Now, multiplying through by $k$, and distributing the summation, we get
\begin{equation*}
N+k\left( \sum\limits_{j=1}^{N}\log \left(
\sum\limits_{i=1}^{n}A_{i}F_{X_{i}}(x_{j})\right) \right) \equiv 0
\end{equation*}
Now, solving for $k$, and putting its estimate, we have shown that the MLE
of $k$ is
\begin{equation*}
\widehat{k}=\frac{-N}{\sum\limits_{j=1}^{N}\log \left(
\sum\limits_{i=1}^{n}A_{i}F_{X_{i}}(x_{j})\right) }
\end{equation*}
Again, by taking the partial derivative of the log likelihood with respect
to $\alpha _{q}$, and setting it $0$, leaving us with:%
\begin{equation*}
\frac{\partial }{\partial \alpha _{q}}\log \left( L\left( k,\alpha _{i}\right) \right) =0
\end{equation*}

\begin{equation*}
\frac{\partial }{\partial \alpha _{q}}\left( \begin{array}{c}
N\log \left( k\right) +\sum\limits_{j=1}^{N}\log \left( \sum\limits_{i=1}^{n}\left( \prod\limits_{j=1,j\neq i}^{n}\frac{\alpha _{j}}{\alpha _{j}-\alpha _{i}}\right) \left( \alpha _{i}e^{-\alpha _{i}x_{j}}\right) \right) \\
+(k-1)\left( \sum\limits_{j=1}^{N}\log \left( \prod\limits_{j=1}^{N}\left( \prod\limits_{j=1,j\neq i}^{n}\frac{\alpha _{j}}{\alpha _{j}-\alpha _{i}}\right) \left( 1-e^{-\alpha _{i}x_{j}}\right) \right) \right)
\end{array}%
\right) \equiv 0
\end{equation*}
\begin{equation*}
\begin{aligned}
&\frac{\partial }{\partial \alpha _{p}}(N\log \left( k\right) )+\frac{\partial }{\partial \alpha _{p}}\left( \sum\limits_{j=1}^{N}\log \left( \sum\limits_{i=1}^{n}\left( \prod\limits_{j=1,j\neq i}^{n}\frac{\alpha _{j}}{\alpha _{j}-\alpha _{i}}\right) \left( \alpha _{i}e^{-\alpha _{i}x_{j}}\right) \right) \right) \\
&+\frac{\partial }{\partial \alpha _{p}}\left( (k-1)\left( \sum\limits_{j=1}^{N}\log \left( \sum\limits_{i=1}^{n}\left( \prod\limits_{j=1,j\neq i}^{n}(\frac{\alpha _{j}}{\alpha _{j}-\alpha _{i}})\right) \left( 1-e^{-\alpha _{i}x_{j}}\right) \right) \right) \right) \equiv 0
\end{aligned}
\end{equation*}
\begin{equation*}
\begin{array}{c}
\sum\limits_{j=1}^{N}\frac{\sum\limits_{i=1}^{n}\left(
\prod\limits_{j=1,j\neq q}^{n}\frac{\alpha _{j}}{\alpha _{j}-\alpha _{q}}%
\right) \left( \sum\limits_{v=1,v\neq q}^{n}\frac{1}{\alpha _{v}-\alpha _{q}%
}\right) \left( \alpha _{i}e^{-\alpha _{i}x_{j}}\right) +\left(
\prod\limits_{j=1,j\neq q}^{n}\frac{\alpha _{j}}{\alpha _{j}-\alpha _{q}}%
\right) \left( \left( 1-\alpha _{q}x_{j}\right) e^{-\alpha _{q}x_{j}}\right) 
}{\sum\limits_{i=1}^{n}\left( \prod\limits_{j=1,j\neq i}^{n}\frac{\alpha
_{j}}{\alpha _{j}-\alpha _{i}}\right) \left( \alpha _{i}e^{-\alpha
_{i}x_{i}}\right) }+(k-1) \\ 
\times \sum\limits_{j=1}^{N}\frac{\sum\limits_{i=1}^{n}\left( \left(
\left( \prod\limits_{j=1,j\neq q}^{n}\frac{\alpha _{j}}{\alpha _{j}-\alpha
_{q}}\right) \left( \sum\limits_{v=1,v\neq q}^{n}\frac{1}{\alpha
_{v}-\alpha _{q}}\right) \right) \left( 1-e^{-\alpha _{i}x_{j}}\right)
+\left( \prod\limits_{j=1,j\neq q}^{n}\frac{\alpha _{j}}{\alpha _{j}-\alpha
_{q}}\right) \left( x_{j}e^{-\alpha _{p}x_{j}}\right) \right) }{%
\sum\limits_{i=1}^{n}\left( \prod\limits_{j=1,j\neq i}^{n}(\frac{\alpha
_{j}}{\alpha _{j}-\alpha _{i}})\right) \left( 1-e^{-\alpha _{i}x_{j}}\right) 
}\equiv 0%
\end{array}
\end{equation*}
\begin{equation*}
\begin{array}{c}
\sum\limits_{j=1}^{N}\frac{\sum\limits_{i=1}^{n}\left( \left(
\sum\limits_{v=1,v\neq q}^{n}\frac{1}{\alpha _{v}-\alpha _{q}}\right)
A_{q}f_{X}\left( x_{i}\right) +A_{i}\left( \left( 1-\alpha _{q}x_{j}\right)
e^{-\alpha _{q}x_{j}}\right) \right) }{\sum\limits_{i=1}^{n}A_{i}f_{X}%
\left( x_{i}\right) } \\ 
+(k-1)\sum\limits_{j=1}^{N}\frac{\sum\limits_{i=1}^{n}\left( \left(
\sum\limits_{v=1,v\neq q}^{n}\frac{1}{\alpha _{v}-\alpha _{q}}\right)
A_{q}F_{X}\left( x_{i}\right) +A_{i}\left( x_{j}e^{-\alpha _{p}x_{j}}\right)
\right) }{\sum\limits_{i=1}^{n}A_{i}F_{X}\left( x_{i}\right) }\equiv 0
\end{array}
\end{equation*}
Some numerical iterative methods, \ such as Newton-Raphson method can be
used since this equation is not solvable analytically. The solution can be
approximate numerically by using software such as MAPLE, R, and MATHEMATICA.
Here we worked with MATHEMATICA, see, Wolfram (1999).

\subsection{ILLUSTRATIVE REAL DATA EXAMPLES}

In this subsection, we analyze real data sets to show that the $EHypo\exp
\left( \overrightarrow{\alpha },k\right) $ distribution can be a better
model than other existing distributions.

Let $Y\sim EHypo\exp \left( \overrightarrow{\alpha },k\right) $, where $%
\overrightarrow{\alpha }$ $=\left( \alpha _{1},\alpha _{2}\right) $ is the
EHypoexponential with exponent parameter $k$. The data set contains $n=128$
measures on the remission times in months of bladder cancer patients. It is
extracted from Lee and Wang \cite{20003}:

\begin{tabular}{l}
$%
0.08,2.09,3.48,4.87,6.94,8.66,13.11,23.63,0.20,2.23,3.52,4.98,6.97,9.02,13.29, 
$ \\ 
$%
0.40,2.26,3.57,5.06,7.09,9.22,13.80,25.74,0.50,2.46,3.64,5.09,7.26,9.47,14.24, 
$ \\ 
$%
25.82,0.51,2.54,0.08,2.09,3.48,4.87,6.94,8.66,13.11,23.63,0.20,2.23,3.52,4.98, 
$ \\ 
$%
6.97,9.02,13.29,0.40,2.26,3.57,5.06,7.09,9.22,13.80,25.74,0.50,2.46,3.64,5.09, 
$ \\ 
$%
7.26,9.47,14.24,25.82,0.51,2.54,3.70,5.17,7.28,9.74,14.76,26.31,0.81,2.62,3.82, 
$ \\ 
$%
5.32,7.32,10.06,14.77,32.15,2.64,3.88,5.32,7.39,10.34,14.83,34.26,0.90,2.69, 
$ \\ 
$%
4.18,5.34,7.59,10.66,15.96,36.66,1.05,2.69,4.23,5.41,7.62,10.75,16.62,43.01, 
$ \\ 
$1.19,2.75,4.26,5.41,7.63,17.12,46.12,1.26,2.83,4.33,7.66,11.25,17.14,79.05,$
\\ 
$%
1.35,2.87,5.62,7.87,11.64,17.36,1.40,3.02,4.34,5.71,7.93,11.79,18.10,1.46,4.40, 
$ \\ 
$%
5.85,8.26,11.98,19.13,1.76,3.25,4.50,6.25,8.37,12.02,2.02,3.31,4.51,6.54,8.53, 
$ \\ 
$%
12.03,20.28,2.02,3.36,6.76,12.07,21.73,2.07,3.36,6.93,8.65,12.63,22.69,5.49. 
$%
\end{tabular}

The probability density function of $Y$ is:%
\begin{eqnarray*}
f_{Y}(x_{i},\alpha_{i},k) &=& k \left( \sum_{i=1}^{n} \left( \prod_{j=1,j\neq i}^{n} \left( \frac{\alpha_{j}}{\alpha_{j}-\alpha_{i}} \right) \right) \left( \alpha_{i}e^{-\alpha_{i}x_{i}} \right) \right) \\
&& \times \left( \sum_{i=1}^{n} \left( \prod_{j=1,j\neq i}^{n} \left( \frac{\alpha_{j}}{\alpha_{j}-\alpha_{i}} \right) \right) \left( 1-e^{-\alpha_{i}x_{i}} \right) \right)^{k-1}
\end{eqnarray*}
for $-\infty <x<\infty ,$ with parameters $k~$and $\alpha _{i}.$ Thus, the
likelihood function is:%
\begin{eqnarray*}
L(x_{i},\alpha_{i},k) &=& \prod_{j=1}^{N} k \left( \frac{e^{-x_{j}\alpha_{2}}\alpha_{1}\alpha_{2}}{\alpha_{1}-\alpha_{2}} - \frac{e^{-x_{j}\alpha_{1}}\alpha_{1}\alpha_{2}}{\alpha_{2}-\alpha_{1}} \right) \\
&& \times \left( 1 - \frac{e^{-x_{j}\alpha_{2}}\alpha_{1}}{\alpha_{1}-\alpha_{2}} - \frac{e^{-x_{j}\alpha_{1}}\alpha_{2}}{\alpha_{2}-\alpha_{1}} \right)^{k-1} \\
&=& k^{N} \left( \prod_{j=1}^{N} \left( \frac{e^{-x_{j}\alpha_{2}}\alpha_{1}\alpha_{2}}{\alpha_{1}-\alpha_{2}} - \frac{e^{-x_{j}\alpha_{1}}\alpha_{1}\alpha_{2}}{\alpha_{2}-\alpha_{1}} \right) \right) \\
&& \times \left( \prod_{j=1}^{N} \left( 1 - \frac{e^{-x_{j}\alpha_{2}}\alpha_{1}}{\alpha_{1}-\alpha_{2}} - \frac{e^{-x_{j}\alpha_{1}}\alpha_{2}}{\alpha_{2}-\alpha_{1}} \right)^{k-1} \right)
\end{eqnarray*}
Where $i=1,2,...,n$ ~~and ~$j=1,2,...,N.$

It can be shown, upon maximizing the likelihood function with respect to $k$
,$\alpha _{1}$ and $\alpha _{2}$, that the maximum likelihood estimator of $%
k $ is:%
\begin{equation*}
\widehat{k} = \frac{-N}{\sum\limits_{j=1}^{N} \log \left(\sum\limits_{i=1}^{n} A_{i}F_{X_{i}}(x_{j})\right)}
\end{equation*}%
and $\alpha _{1}$ and $\alpha _{2}$ verify the following implicit Eq.%
\begin{equation*}
\begin{array}{c}
\sum\limits_{j=1}^{N}\frac{\sum\limits_{i=1}^{n}\left( \left(
\sum\limits_{v=1,v\neq q}^{n}\frac{1}{\alpha _{v}-\alpha _{q}}\right)
A_{q}f_{X}\left( x_{i}\right) +A_{i}\left( \left( 1-\alpha _{q}x_{j}\right)
e^{-\alpha _{q}x_{j}}\right) \right) }{\sum\limits_{i=1}^{n}A_{i}f_{X}%
\left( x_{i}\right) } \\ 
+(k-1)\sum\limits_{j=1}^{N}\frac{\sum\limits_{i=1}^{n}\left( \left(
\sum\limits_{v=1,v\neq q}^{n}\frac{1}{\alpha _{v}-\alpha _{q}}\right)
A_{q}F_{X}\left( x_{i}\right) +A_{i}\left( x_{j}e^{-\alpha _{p}x_{j}}\right)
\right) }{\sum\limits_{i=1}^{n}A_{i}F_{X}\left( x_{i}\right) }\equiv 0%
\end{array}%
\end{equation*}
Based on the given sample, and using MATHEMATICA the maximum likelihood
estimate of $k$ ,$\alpha _{1}$ and $\alpha _{2}$ are:%
\begin{eqnarray*}
\widehat{k} &=&0.51541 \\
\widehat{\alpha }_{1} &=&0.105418 \\
\widehat{\alpha }_{2} &=&0.718769
\end{eqnarray*}%

For the data set, we compare the fitted distributions using the criteria: -2log(L), Akaike information criterion (AIC), Akaike information criterion corrected (AICC), Bayesian information criterion (BIC), Anderson-Darling ($A^{\ast}$), and Cramer-von Mises ($W^{\ast}$). Let us be accurate that $\log(L)$ is the log-likelihood taken with the estimated values, AIC=-2log(L)+2c, AICC = AIC + $\frac{2c(c+1)}{v-c-1}$, and BIC = -2log(L) + clog(v), where c denotes the number of estimated parameters and v denotes the sample size. The best-fitted distribution corresponds to lower -2log(L), AIC, AICC, BIC, $A^{\ast}$, and $W^{\ast}$. We see in Tables 1 and 2 that the $EHyp\exp (\left\{ a_{1},a_{2}\right\},k)$ distribution has the smallest -2log(L), AIC, AICC, BIC, $A^{\ast}$, and $W^{\ast}$ for the two data sets (except for the data set where the Hypexp distribution has the smallest BIC), indicating that it is a serious competitor to the other considered distributions. Also, graphically, the behaviors of our model, Figure \ref{fig:2}, show the fitted densities superimposed on the histogram of the given data sets.
\newline
\begin{table}[htbp]
    \centering
    \begin{tabular}{|l|l|l|l|l|l|}
\hline
$\text{Model}$ & $\text{Estimate of parameters}$ & $-2log(L)$ & $AIC$ & $%
AICC $ & $BIC$ \\ \hline
$Hyp\exp $ & $%
\begin{tabular}{r}
$\widehat{\alpha }_{1}=0.114536$ \\ 
$\widehat{\alpha }_{2}=1.57547$%
\end{tabular}%
$ & $826.09$ & $1082.09$ & $822.058$ & $914.813$ \\ \hline
$EHyp\exp $ & 
\begin{tabular}{l}
$\widehat{\alpha }_{1}=0.105418$ \\ 
$\widehat{\alpha }_{2}=0.718769$ \\ 
$\widehat{k}=0.51541$%
\end{tabular}
& $823.33$ & $1079.33$ & $817.234$ & $963.952$ \\ \hline
\end{tabular}
\end{table}

\begin{table}[htbp]
    \centering
    \begin{tabular}{|l|l|l|}
        \hline
        $\text{Model}$ & $A^{\ast }$ & $W^{\ast }$ \\ \hline
        $Hyp$ & $0.541467$ & $0.0889889$ \\ \hline
        $EHyp$ & $0.455972$ & $0.0767364$ \\ \hline
    \end{tabular}
    \caption{Results for Model Parameters}
    \label{tab:model_results}
\end{table}

\begin{figure}[htbp]
    \centering
    \includegraphics[width=0.6\textwidth]{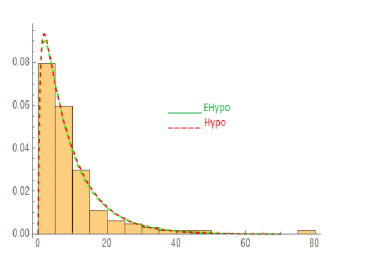}
    \caption{The Fitted Densities Superimposed on the Histogram of Data Sets.}
    \label{fig:5}
\end{figure}
\section{Conclusion And Open Problems}
\label{sec:conc}
\bigskip
\begin{center}
    {\large\bf CHARACTERIZATION \& ESTIMATION}
\end{center}

In this article we investigated the problem of characterizing the PDF, some statistical functions and maximum likelihood estimation of parameters of the EHypoexponential random variables that have applications in many scientific areas.

We derived expressions for PDF, CDF, and some statistical parameters of Maximum Exponentiated Exponential random variable with different parameters that are later used to obtain our results. Moreover, we determined closed expressions for PDF, CDF, some statistical parameters and examined the Maximum Likelihood Estimation of parameters of EHypoexponential Distribution.


\begin{thebibliography}{20}

\bibitem{ab} Abdelkader, Y. H. (2003). Erlang distributed activity times in stochastic activity networks. Kybernetika, 39(3), 347-358.

\bibitem{2005} Al-Hussaini, E. K., \& Ghitany, M. E. (2005). On certain countable mixtures of absolutely continuous distributions. Metron-International Journal of Statistics, 63(1), 39-53.

\bibitem{2009} Al-Hussaini, E. K., \& Gharib, M. (2009). A new family of distributions as a countable mixture with Poisson added parameters. Journal of Statistical Theory and Applications, 8, 169-190.

\bibitem{b} Bolch, G., Greiner, S., de Meer, H., \& Trivedi, K. S. (2006). Queueing networks and Markov chains: modeling and performance evaluation with computer science applications. 2nd Edition John Wiley \& Sons. doi: 10.1002/0471791571.

\bibitem{fa} Fang, Y., Chlamtac, I., \& Lin, Y. B. (1997). Modeling PCS networks under general call holding time and cell residence time distributions. Networking, IEEE/ACM Transactions on, 5(6), 893-906. doi: 10.1109/90.650148.

\bibitem{1998} Gupta, R. C., Gupta, P. L., \& Gupta, R. D. (1998). Modeling failure time data by Lehman alternatives. Communications in Statistics-Theory and methods, 27(4), 887-904.

\bibitem{2001a} Gupta, R. D., \& Kundu, D. (2001). Exponentiated exponential family: an alternative to gamma and Weibull distributions. Biometrical Journal: Journal of Mathematical Methods in Biosciences, 43(1), 117-130.

\bibitem{2001b} Gupta, R. D., \& Kundu, D. (2001). Generalized Exponential distribution: different method of estimations. Journal of Statistical Computation and Simulation, 69(4), 315-337.

\bibitem{2002} Gupta, R. D., \& Kundu, D. (2002). Generalized Exponential distribution: statistical interences. Journal of Statistical Theory and Applications, 1, 101-118.

\bibitem{2005b} Hassan, A. S. (2005). Goodness-of-fit for the generalized Exponential distribution. Interstat electronic Journal. Retrieved from http://interstat.statjournals.net/YEAR/2005/abstracts/0507001.php.

\bibitem{44} Kadri, T., Smaili, K. and Kadry, S.(2015). Markov modeling for reliability analysis using Hypoexponential distribution. In Numerical Methods for Reliability and Safety Assessment, 599-620. Springer, Cham.

\bibitem{Phd} Kadri, T. (2014). Characterizing The Probability Density Function And Cumulative Distribution Function Of The Sum And Quotient Random Variables, PhD Thesis Dissertation.

\bibitem{20003} Lee, Elisa T., and John Wang. Statistical methods for survival data analysis. Vol. 476. John Wiley \& Sons, 2003.

\bibitem{1997} Marshall, A.W., Olkin, I.: A new method for adding a parameter to a family of distributions with applications to the Exponential and Weibull families. Biometrika 84(3), 641 -652 (1997)

\bibitem{2003} Nadarajah, S., \& Kotz, S. (2003). The Exponentiated Fr\`{e}chet distribution. Interstat Electronic Journal, 14, 01-07.

\bibitem{2006} Pal, M., Ali, M. M., \& Woo, J. (2006). Exponentiated Weibull distribution. Statistica, 66(2), 139-147.

\bibitem{q} Pechinkin, A.V., D'Apice, C., Bocharov, P.P. and Salerno, S. (2003). Queueing Theory.

\bibitem{m} Smaili, K., Kadri, T. and Kadry, S. (2014). A modified-form expressions for the Hypoexponential distribution. British Journal of Mathematics \& Computer Science, 4(3), 322-332.

\bibitem{Hy.diff} Smaili, K., Kadri, T. and Kadry, S. (2013). Hypoexponential distribution with different parameters. Applied mathematics, 4(04), 624-631.

\bibitem{25} Trivedi, K.S. (2008). Probability and statistics with reliability, queuing and computer science applications. John Wiley and Sons.

\end{thebibliography}
\end{document}